\documentclass[11pt]{article}


\usepackage{amsmath}
\usepackage{amssymb}
\usepackage{amsmath}
\usepackage{amstext}
\usepackage{amsthm}
\usepackage{amsbsy}
\usepackage{amsfonts}
\usepackage{dsfont}
\usepackage{marvosym}
\usepackage{mathtools}
\usepackage{nicefrac}
\usepackage{resizegather} 
\usepackage{authblk}
\usepackage[symbol]{footmisc}
\usepackage[nottoc]{tocbibind}
\usepackage[margin=2.5cm]{geometry}

\newtheorem{thm}{Theorem}[section]
\newtheorem{lem}[thm]{Lemma}

\usepackage{enumitem}

\setlength\itemsep{0em}

\usepackage[colorlinks=true,linkcolor=black,urlcolor=black,citecolor=black]{hyperref}

\usepackage{graphicx}
\usepackage[section]{placeins}

\usepackage{tabularx}
\usepackage{tabulary}
\usepackage{booktabs}
\usepackage{array}
\usepackage{multirow}
\usepackage{mathdots}
\usepackage{chngcntr}
\counterwithout{figure}{section} 
\counterwithout{table}{section} 

\usepackage{tikz}
\usepackage{tkz-euclide}
\usetikzlibrary{positioning,calc,arrows,shapes,backgrounds}

\usepackage{glossaries}
\glsdisablehyper

\setcounter{tocdepth}{1}

\usepackage{pdfpages}

\usepackage[ruled,linesnumbered]{algorithm2e}
\SetKwInOut{Input}{Input}
\SetKwInOut{Output}{Output}
\SetKwInOut{Set}{Set}


\newcommand{\eps}{\varepsilon}

\newcommand{\gamb}{\bar{\gamma}}
\newcommand{\ALGw}{\ensuremath{\mathrm{ALG_{wf}}}\xspace}

\usepackage{csquotes}

\DeclareMathOperator*{\ALG}{ALG}
\DeclareMathOperator*{\OPT}{OPT}
%


\begin{document}
\begin{center}
{\Large A Stronger Impossibility for Fully Online Matching\footnote[1]{Funding: This work was supported in part by the Alexander von Humboldt Foundation with funds from the German Federal Ministry of Education and Research (BMBF) and by the Deutsche Forschungsgemeinschaft (DFG), GRK 2201.\\
\textit{Email addresses:} \texttt{alexander.eckl@tum.de} (Alexander Eckl), \texttt{anja.kirschbaum@tum.de} (Anja Kirschbaum), \texttt{marilena.leichter@tum.de} (Marilena Leichter), \texttt{kschewior@gmail.com} (Kevin Schewior)}}\\
\par\vspace{\baselineskip}

Alexander Eckl\textsuperscript{1}, Anja Kirschbaum\textsuperscript{1}, Marilena Leichter\textsuperscript{1}, Kevin Schewior\textsuperscript{2}
\par\vspace{\baselineskip}
\footnotesize{\textsuperscript{1} Advanced Optimization in a Networked Economy (AdONE), Technical University of Munich,  Germany\\
\textsuperscript{2} Department of Mathematics and Computer Science,  University of Cologne,  Germany}
\end{center}
\par\vspace{\baselineskip}



\begin{abstract}
We revisit the fully online matching model (Huang et al., J.\ ACM, 2020), an extension of the classic online matching model due to Karp, Vazirani, and Vazirani (STOC 1990), which has recently received a lot of attention (Huang et al., SODA 2019 and FOCS 2020), partly due to applications in ride-sharing platforms. It has been shown that the fully online version is harder than the classic version for which the achievable competitive ratio is at most $0.6317$, rather than precisely $1-\nicefrac1e\approx 0.6321$. We introduce two new ideas to the construction. By optimizing the parameters of the modified construction numerically, we obtain an improved impossibility result of $0.6297$. Like the previous bound, the new bound even holds for fractional (rather than randomized) algorithms on bipartite graphs.
\end{abstract}

\section{Introduction}
\label{sec:introduction}
In the \emph{fully online matching} model due to Huang et al.~\cite{HuangKangEtAl2020a}, the vertices of an undirected graph $G=(V,E)$ arrive over time. Every vertex has a deadline at which it \emph{departs}. At any time, two adjacent vertices can be matched if both of them have arrived, neither of them has departed, and neither of them has been previously matched. An \emph{online} algorithm may base its decisions only on the subgraph of $G$ induced by the vertices that have already arrived, the deadlines of these vertices, and, in the case of a randomized algorithm, random bits.

The classic model due to Karp, Vazirani, and Vazirani~\cite{KarpVaziraniEtAl1990} is a special case. Here, additionally, $G$ is bipartite, initially all vertices from one side arrive, and then the vertices from the other side arrive. The vertices from the former side only depart at the very end. The vertices from the latter side depart even before the next vertex from that side arrives, or, equivalently, they must be matched either upon arrival or never. The classic model and its variants have been extensively studied prior to the work of Huang et al.~\cite{HuangKangEtAl2020a} and have many applications, e.g., in online advertising~\cite{Mehta2013}. However, a scenario not addressed by these models is, e.g., that of ride-sharing platforms in which customers and compatible drivers have to be matched, but both customers and drivers enter and leave the system at arbitrary times. This aspect is addressed by the fully online model.

In both models, the performance of an online algorithm is measured through standard competitive analysis with respect to the total number of matches performed by the algorithm over the entire time horizon: A randomized online algorithm is called $\alpha$-competitive if, for all instances, the expected number of matches it performs is at least an $\alpha$ fraction of the number of matches an omniscient algorithm could have performed. It is well known~\cite{KarpVaziraniEtAl1990,BirnbaumM08} that for the classic model the largest competitive ratio achievable by randomized algorithms is $1-\nicefrac1e\approx 0.6321$, ignoring $o(1)$ terms as $|V|\rightarrow\infty$. It can be obtained through the Ranking algorithm. The same holds true for the \emph{fractional} relaxation of the model in which a (w.l.o.g.\ deterministic) algorithm may fractionally match vertices, thus obtaining a fractional matching~\cite{KalyanasundaramP00}.

\begin{figure}
\centering
\begin{tikzpicture}[xscale=1.3,yscale=.6]
	\draw[->] (1.8,0) -- (13.7,0);	
	\draw (2.13,0.2) -- (2.11,0.2) -- (2.11,-0.2) -- (2.13,-0.2);
	\draw (6.92,0.2) -- (6.90,0.2) -- (6.90,-0.2) -- (6.92,-0.2);
	\draw (9.28,0.2) -- (9.26,0.2) -- (9.26,-0.2) -- (9.28,-0.2);			
	\draw (12.95,0.2) -- (12.97,0.2) -- (12.97,-0.2) -- (12.95,-0.2);
	\draw (13.15,0.2) -- (13.17,0.2) -- (13.17,-0.2) -- (13.15,-0.2);	
	\draw (13.19,0.2) -- (13.21,0.2) -- (13.21,-0.2) -- (13.19,-0.2);		
	\node at (13.7,0.4) {$\alpha$};
	\draw[very thin, draw=black!50] (3,-1) -- (2.11,-0.225);
	\node[anchor=north,align=center] at (3,-1) {\footnotesize $0.5211$\\[-0.7ex] \footnotesize gen.\ int.\ LB\\[-0.7ex] \footnotesize Huang et al.~\cite{HuangKangEtAl2020a}};	
	\draw[very thin, draw=black!50] (6.9,-1) -- (6.9,-0.225);
	\node[anchor=north,align=center] at (6.9,-1) {\footnotesize $0.5690$\\[-0.7ex] \footnotesize bip.\ int.\ LB\\[-0.7ex] \footnotesize Huang et al.~\cite{HuangTangEtAl2020}};
	\draw[very thin, draw=black!50] (9.26,-1) -- (9.26,-0.225);
	\node[anchor=north,align=center] at (9.26,-1) {\footnotesize $0.5926$\\[-0.7ex] \footnotesize gen.\ frac.\ LB\\[-0.7ex] \footnotesize Huang et al.~\cite{HuangTangEtAl2020}};
	\draw[very thin, draw=black!50] (7,1) -- (12.97,0.225);
	\node[anchor=south,align=center] at (7,1) {\footnotesize \textbf{this work} \\[-0.7ex] \footnotesize bip.\ frac.\ UB\\[-0.7ex] \footnotesize $0.6297$};
	\draw[very thin, draw=black!50] (9.5,1) -- (13.17,0.225);
	\node[anchor=south,align=center] at (9.5,1) {\footnotesize Huang et al.~\cite{HuangKangEtAl2020a}\\[-0.7ex] \footnotesize bip.\ frac.\ UB\\[-0.7ex] \footnotesize $0.6317$};
	\draw[very thin, draw=black!50] (12,1) -- (13.21,0.225);
	\node[anchor=south,align=center] at (12,1) {\footnotesize classic~\cite{KarpVaziraniEtAl1990,KalyanasundaramP00} \\[-0.7ex] \footnotesize bip.\ frac.\ UB\\[-0.7ex] \footnotesize $0.6321$};							
\end{tikzpicture}
\caption{Known bounds on the best-possible competitive ratio $\alpha$ for fully online matching. We distinguish between bipartite and general graphs as well as fractional and integral algorithms.}
\label{fig:bounds}
\end{figure}
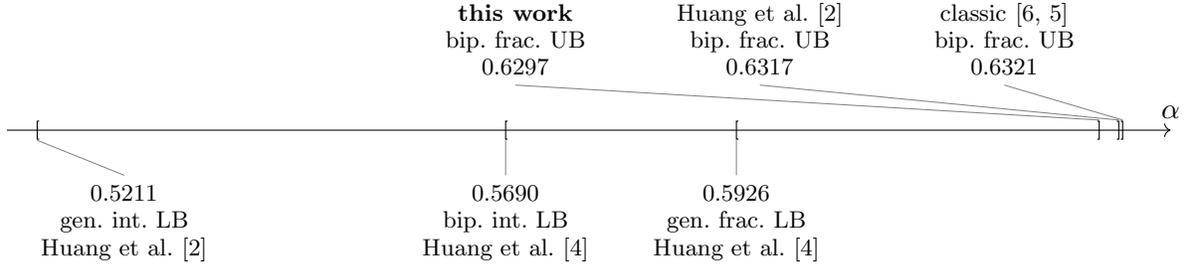

In their seminal work on fully online matching, Huang et al.~\cite{HuangKangEtAl2020a} prove that a generalization of the Ranking algorithm to general graphs is $0.5211$-competitive, beating the trivial baseline of $0.5$, and that no fractional algorithm can achieve a guarantee better than $0.6317$ even on bipartite graphs, also beating the baseline of $1-\nicefrac1e$. In follow-up work, Huang et al.~\cite{HuangPengEtAl2019,HuangTangEtAl2020} revisit the fractional and bipartite cases. The state of the art is a $0.5926$-competitive fractional algorithm on general graphs~\cite{HuangTangEtAl2020} and a $0.5690$-competitive integral (as opposed to fractional) algorithm on bipartite graphs~\cite{HuangTangEtAl2020}. In particular, the impossibility of $0.6317$ is still the state of the art, even for integral algorithms on general graphs. For further related work, we refer to the literature review by Huang et al.~\cite{HuangKangEtAl2020a} and the slightly older (but in-depth) survey by Metha~\cite{Mehta2013}.

\subsection{Our Contribution}

In this work, we give an improvement of the impossibility from $0.6317$ by Huang et al.~\cite{HuangKangEtAl2020a} to $0.6297$. Again, this bound also holds for fractional algorithms and on bipartite graphs. While the construction of Huang et al.\ is not explicitly analyzed for fractional algorithms, the analysis can be easily adapted. In contrast, we explicitly analyze fractional algorithms. We do so without using Yao's principle, avoiding any difficulties that would arise from an infinite strategy space. We show the evolution of this impossibility as well as the state-of-the-art competitive ratios in Figure~\ref{fig:bounds}.

The construction of Huang et al.~\cite{HuangKangEtAl2020a} starts off with presenting a tree level by level starting from the root. The root has degree $\lambda+1$, and exactly one of its children is a leaf.
Which of the children the designated leaf is, however, remains unknown until the root has departed. 
While the optimum can match the leaf vertex to the root, the online algorithm cannot do better than matching the first-level vertices to the root with identical fractional value, leaving some fraction of the leaf vertex permanently unmatched. The construction is then repeated with the $\lambda$ non-leaf children playing the role of the root, for a total of $h$ levels.
Finally, level $h$ of the tree is augmented with the later-arriving side of the \enquote{triangle} construction of Karp, Vazirani, and Vazirani~\cite{KarpVaziraniEtAl1990}. It can be shown that the extra tree puts the algorithm in a worse position and that choosing $\lambda=7$ and $h\rightarrow\infty$ yields the bound of~$0.6317$.

The first ingredient to our result is the observation that this result can be reproduced with a slightly simpler construction: Rather than constructing a tree, one can also construct a sequence of bicliques. Specifically, we start with $k$ vertices (corresponding to the root) on the first level and $(\lambda+1) k$ vertices on the second level, connected by a biclique. After the vertices on the previous level depart, it is revealed which of the $\lambda k$ vertices form a biclique with $(\lambda+1)\lambda k$ new vertices on the next level. Again, this process is repeated for a total of $h$ steps and the \enquote{triangle} construction is added at the end. Note that now we are also allowed to choose any rational value for $\lambda$ as long as we make $k$ large enough. In fact, choosing $\lambda\approx 7.2336$ already slightly
improves the bound by about $9.5\cdot 10^{-7}$.

The observation leading to our second idea is that not all levels play exchangeable roles in the construction. This is clear for the last level, which is part of the triangle construction. Further, we will see the following for an optimal algorithm: Increasing the number of children in any level $i$ changes the average matching value upon departure in any level $j$ with $i<j<h$. Indeed, this matching value decreases if $j-i$ is odd and increases otherwise. This suggests that choosing a uniform $\lambda$ for all levels is unlikely to yield a tight impossibility.

In general, we replace $\lambda$ with different factors $\gamma_1,\dots,\gamma_\ell$ for $\ell$ levels between the first $h$ levels and the \enquote{triangle} construction. After taking the limit $h\rightarrow\infty$, we express the precise resulting competitive ratio as a function of $\lambda, \gamma_1,\dots,\gamma_\ell$. Unfortunately, this function is already non-convex for $\ell=0$, and the explicit formula gets quite complicated even for small values of $\ell$. We therefore have to resort to numerical optimization. Specifically, we obtain the bound of $0.6297$ through numerical optimization with Matlab. In this way, although we cannot prove that $0.6297$ is the strongest impossibility achievable with our approach, the results of the numerical optimization do hint at that, and the claimed impossibility is provably correct.

We note that the numerical values we find for $\lambda,\gamma_1,\dots,\gamma_\ell$ back our intuition that the optimal value heavily depends on whether the corresponding level is the last one, and otherwise its parity.


\section{Description of the Construction}
\label{sec:description}
In this section, we formally describe the construction that is used to show the improved impossibility. The construction is parameterized by integers $h,\ell\geq 0$ and $k\geq 1$ as well as fractional numbers $\lambda>1$ and $\gamma_j>1$ for all $j\in[\ell]\coloneqq\{1,\dots,\ell\}$. We first define the bipartite graph $G$, visualized in Figure~\ref{fig:ex}, underlying the construction and then specify the arrival and departure times of the vertices.

The vertex set of $G$ can be partitioned into $2\cdot(h{+}\ell{+}1)$ disjoint sets $U_i,V_i$ for all $i\in\{0,\dots,h{+}\ell\}$. We have $|U_i|=|V_i|$ for each such $i$. The set $U_0$ consists of $k$ vertices. With increasing index, the number of vertices increases for the first $h$ steps by a factor of $\lambda$ and in the $j$-th subsequent step by a factor of $\gamma_j$. Formally,
\begin{equation*}
	\begin{aligned}
		|U_i| 		&= \lambda^i k 			&& \forall i \in  \{0,\dots,h\}, \\
		|U_{h+j}| 	&= \gamma_j |U_{h+j-1}|	&& \forall j \in \{1,\dots,\ell\}. \\	
	\end{aligned}
\end{equation*}
We note that by the continuous nature of the variables $\gamma_j$ and $\lambda$, some of the sizes $|U_i| = |V_i|$ might take non-integer values. We avoid this by choosing the parameter $k$ in dependence of $h$ large enough such that all of these numbers are integer.

We introduce edges such that $(U_i,V_i\cup U_{i+1})$ is a complete bipartite graph for all $i \in \{0,\dots,h{+}\ell{-}1\}$. The edges between $U_{h{+}\ell}$ and $V_{h{+}\ell}$ form an exception. Due to this exception, we define $A\coloneqq U_{h{+}\ell}$ and $B \coloneqq V_{h{+}\ell}$. In Section~\ref{sec:proof}, we describe how the adversary chooses an ordering of $A=\{a_1,\dots,a_{|A|}\}$ and $B=\{b_1,\dots,b_{|B|}\}$. Then $a_i$ and $b_j$ are connected by an edge if and only if $i\geq j$. The subgraph of $G$ induced by $A\cup B$ has been previously used to show impossibilities for online matching problems~\cite{KarpVaziraniEtAl1990,KalyanasundaramP00}. This completes the description of $G$.

\begin{figure}
\begin{center}
\def\Vset[#1] (#2) (#3,#4)
  { \draw[#1, thick] (#2) circle  [x radius=#3cm, y radius=#4cm]; }
\scalebox{.8}{
\begin{tikzpicture}[scale=.6, every vertex/.style={scale=0.8}]

\edef\x{3}
\edef\xu{5}

\edef\y{8.7}
\edef\rx{1}
\edef\ry{.35}
\Vset[black] (\x,\y) (\rx,\ry) 
\draw (\x,\y) node {$U_0$};

\edef\yu{8}
\edef\xu{\x+2*\rx+.8}
\Vset[black] (\xu,\yu) (\rx,\ry) 
\draw (\xu,\yu) node {$V_0$};

\edef\crox{\xu-.6*\rx}
\edef\croy{\yu+1.1*\ry}
\edef\clox{\x+1*\rx}
\edef\cloy{\y+.3*\ry}
\edef\clux{\x+\rx}
\edef\cluy{\y-.6*\ry}
\edef\crux{\xu-1.1*\rx}
\edef\cruy{\yu+.1*\ry}
\draw[] (\clox,\cloy+.1)--(\crox+.1,\croy);
\draw[] (\clux,\cluy+.05)--(\crox-.05,\croy-.05);
\draw[] (\clux,\cluy-.05)--(\crux,\cruy-.05);
\draw[] (\clox+.05,\cloy-.05)--(\crux+.05,\cruy+.05);

\edef\yb{7}
\edef\rxb{1.35}
\edef\ryb{.5}
\Vset[black] (\x,\yb) (\rxb,\ryb) 
\draw (\x,\yb) node {$U_1$};

\edef\yu{6}
\edef\xu{\x+2*\rxb+.8}
\Vset[black] (\xu,\yu) (\rxb,\ryb) 
\draw (\xu,\yu) node {$V_1$};

\edef\crox{\xu-.6*\rxb}
\edef\croy{\yu+1.1*\ryb}
\edef\clox{\x+1*\rxb}
\edef\cloy{\yb+.3*\ryb}
\edef\clux{\x+\rxb}
\edef\cluy{\yb-.6*\ryb}
\edef\crux{\xu-1.1*\rxb}
\edef\cruy{\yu+.1*\ryb}
\draw[] (\clox,\cloy+.1)--(\crox+.1,\croy);
\draw[] (\clux,\cluy+.05)--(\crox-.05,\croy-.05);
\draw[] (\clux,\cluy-.05)--(\crux,\cruy-.05);
\draw[] (\clox+.05,\cloy-.05)--(\crux+.05,\cruy+.05);

\edef\oxa{\x-.6*\rx}
\edef\oxb{\x+.6*\rx}
\edef\oy{\y-\ry-.1}
\edef\uxa{\x-.6*\rxb}
\edef\uxc{\x}
\edef\uxb{\x+.6*\rxb}
\edef\uy{\yb+\ryb+.1}
\draw[] (\oxa-.1,\oy)--(\uxa-.1,\uy);
\draw[] (\oxb+.1,\oy)--(\uxb+.1,\uy);
\draw[] (\oxa+.1,\oy)--(\uxb-.1,\uy);
\draw[] (\oxb-.1,\oy)--(\uxa+.1,\uy);
\draw[] (\oxa,\oy)--(\uxc-.1,\uy);
\draw[] (\oxb,\oy)--(\uxc+.1,\uy);

\edef\y{7}
\edef\rx{1.5}
\edef\ry{.5}
\edef\yb{4.5}
\edef\rxb{2}
\edef\ryb{.8}

\edef\oxa{\x-.6*\rx}
\edef\oxb{\x+.6*\rx}
\edef\oy{\y-\ry-.1}
\edef\uxa{\x-.6*\rxb}
\edef\uxc{\x}
\edef\uxb{\x+.6*\rxb}
\edef\uy{\yb+\ryb+.1}
\draw[] (\oxa-.1,\oy)--(\uxa-.1,\uy);
\draw[] (\oxb+.1,\oy)--(\uxb+.1,\uy);
\draw[] (\oxa+.1,\oy)--(\uxb-.1,\uy);
\draw[] (\oxb-.1,\oy)--(\uxa+.1,\uy);
\draw[] (\oxa,\oy)--(\uxc-.1,\uy);
\draw[] (\oxb,\oy)--(\uxc+.1,\uy);

\edef\y{5.8}
\edef\rx{1.6}
\edef\ry{.6}
\edef\yb{3.5}
\edef\rxb{2.2}
\edef\ryb{.6}
\Vset[black] (\x,\yb) (\rxb,\ryb) 
\draw (\x,\yb) node {$U_{h+\ell-1}$};
\edef\yu{2.5}
\edef\xu{\x+2*\rxb+.8}
\Vset[black] (\xu,\yu) (\rxb,\ryb) 
\draw (\xu,\yu) node {$V_{h+\ell-1}$};

\edef\crox{\xu-.6*\rxb}
\edef\croy{\yu+1.1*\ryb}
\edef\clox{\x+1*\rxb}
\edef\cloy{\yb+.3*\ryb}
\edef\clux{\x+\rxb}
\edef\cluy{\yb-.6*\ryb}
\edef\crux{\xu-1.1*\rxb}
\edef\cruy{\yu+.1*\ryb}
\draw[] (\clox,\cloy+.1)--(\crox+.1,\croy);
\draw[] (\clux,\cluy+.05)--(\crox-.05,\croy-.05);
\draw[] (\clux,\cluy-.05)--(\crux,\cruy-.05);
\draw[] (\clox+.05,\cloy-.05)--(\crux+.05,\cruy+.05);

\edef\oxa{\x-.6*\rx}
\edef\oxb{\x+.6*\rx}
\edef\oy{\y-\ry-.1}
\edef\uxa{\x-.6*\rxb}
\edef\uxc{\x}
\edef\uxb{\x+.6*\rxb}
\edef\uy{\yb+\ryb+.1}
\draw[] (\oxa-.1,\oy)--(\uxa-.1,\uy);
\draw[] (\oxb+.1,\oy)--(\uxb+.1,\uy);
\draw[] (\oxa+.1,\oy)--(\uxb-.1,\uy);
\draw[] (\oxb-.1,\oy)--(\uxa+.1,\uy);
\draw[] (\oxa,\oy)--(\uxc-.1,\uy);
\draw[] (\oxb,\oy)--(\uxc+.1,\uy);

\fill[fill=white](1.5,\y-.2)--(4.5,\y-.2)--(4.5,\y-.9)--(1.5,\y-.9)--cycle;
\filldraw (\x-.5,\y-.5) circle (.6pt);
\filldraw (\x,\y-.5) circle (.6pt);
\filldraw (\x+.5,\y-.5) circle (.6pt);

\edef\y{3.5}
\edef\rx{2.2}
\edef\ry{.6}
\edef\yb{1}
\edef\rxb{2.5}
\edef\ryb{.7}

\Vset[black] (\x,\yb) (\rxb,\ryb) 
\draw (\x,\yb) node {$U_{h+\ell}=A$};

\edef\yu{0}
\edef\xu{\x+2*\rxb+.8}
\Vset[black] (\xu,\yu) (\rxb,\ryb) 
\draw (\xu,\yu) node {$V_{h+\ell}=B$};

\edef\oxa{\x-.5*\rx}
\edef\oxb{\x+.5*\rx}
\edef\oy{\y-\ry-.1}
\edef\uxa{\x-.6*\rxb}
\edef\uxc{\x}
\edef\uxb{\x+.6*\rxb}
\edef\uy{\yb+\ryb+.1}
\draw[] (\oxa-.1,\oy)--(\uxa-.1,\uy);
\draw[] (\oxb+.1,\oy)--(\uxb+.1,\uy);
\draw[] (\oxa+.1,\oy)--(\uxb-.1,\uy);
\draw[] (\oxb-.1,\oy)--(\uxa+.1,\uy);
\draw[] (\oxa,\oy)--(\uxc-.1,\uy);
\draw[] (\oxb,\oy)--(\uxc+.1,\uy);

\edef\crox{\xu-.6*\rxb}
\edef\croy{\yu+1.1*\ryb}
\edef\clox{\x+1*\rxb}
\edef\cloy{\yb+.3*\ryb}
\edef\clux{\x+\rxb}
\edef\cluy{\yb-.6*\ryb}
\edef\crux{\xu-1.1*\rxb}
\edef\cruy{\yu+.1*\ryb}
\edef\crmx{\xu-.9*\rxb}
\edef\crmy{\yu+.55*\ryb}
\draw[] (\clox,\cloy+.1)--(\crox+.1,\croy);
\draw[] (\clux,\cluy+.05)--(\crox-.05,\croy-.1);
\draw[] (\clux,\cluy-.05)--(\crux,\cruy-.05);
\draw[] (\clux+.05,\cluy)--(\crmx-.05,\crmy-.05);

\draw[] (\clox+.06,\cloy-.28)--(\crmx,\crmy+.025);
\draw[] (\clox+.1,\cloy-.22)--(\crox-.05,\croy-.02);

\end{tikzpicture}}
\end{center}
\caption{Visualization of the bipartite graph $G$.}\label{fig:ex}
\end{figure}

To complete the description of our \emph{online} construction, we specify the arrival and departure times of the vertices in the graph $G$. At the beginning, all vertices in $U_0$ arrive simultaneously. Let $i \in \{0,\dots,h{+}\ell{-}1\}$ and assume all vertices in $U_j$ for $j\leq i$ and in $V_j$ for $j\leq i-1$ have already arrived, and all of these vertices not in $U_i$ have departed again.
Then, all vertices in $U_{i+1}$ and $V_i$ (by definition, along with their edges to $U_i$) arrive simultaneously. Next, the vertices in $U_i$ and $V_{i-1}$ (if $i\geq 1$) depart simultaneously. At this point, it is impossible for an algorithm to differentiate between the neighbors of $U_i$, in particular identifying which of them belong to $U_{i+1}$ and which to $V_i$. In Section~\ref{sec:proof}, we describe how the adversary can assign these vertices to $U_{i+1}$ and $V_i$ based on the matching decisions of the algorithm.
If $i\leq h{+}\ell{-}2$, the initial assumption is reestablished for the next-larger index, and the arrivals and departures happen as described. Otherwise, $A=U_{h{+}\ell}$ has now arrived, and all other previously arrived vertices have departed. Then the vertices in $B$ arrive in order of $b_1,\dots,b_{|B|}$, and every vertex departs again immediately after its arrival. Finally, the vertices $A$ and $V_{h+\ell-1}$ depart simultaneously.
Note that all edges of the graph respect release and departure times in the sense that, for all of them, there is a time when both their endpoints have arrived but not departed.

The graph has a perfect matching which is the union of the following perfect matchings: From the complete bipartite subgraphs induced by  $U_i \cup V_i$ we choose an arbitrary perfect matching. For $A$ and $B$ we choose the perfect matching containing $(a_i,b_i)$ for all $i \in [|A|]$. We also remark that $U \coloneqq \bigcup_{i=0}^{h+\ell} U_i$ and $V \coloneqq \bigcup_{i=0}^{h+\ell} V_i$, however, do \emph{not} correspond to the two sides of the bipartite graph.

\section{Derivation of the Upper Bound}
\label{sec:proof}

Using the construction from the previous section, we show our result in this section.

\begin{thm}\label{thm:main}
	The competitive ratio of any fractional algorithm for fully online matching is at most $0.6297$, even on bipartite graphs.
\end{thm}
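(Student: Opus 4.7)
The plan is to analyze the algorithm's performance on the instance from Section~\ref{sec:description} layer by layer, express the resulting competitive ratio as a closed-form function $F(\lambda,\gamma_1,\gamma_2,\gamma_3)$ in the limit $h\to\infty$, and then minimize $F$ numerically. The algorithm is fractional (and w.l.o.g.\ deterministic), so that Yao's principle is not needed.

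First I would specify the adversary's reaction. When $U_{i-1}$ is about to depart, the $(\lambda+1)|U_{i-1}|$ (or $(\gamma_j+1)|U_{h+j-1}|$) neighbors that arrived simultaneously are indistinguishable to the algorithm, and the adversary assigns $|U_{i-1}|$ of them to $V_{i-1}$ (``dead ends'' with no further neighbors) and the rest to $U_i$. Because the algorithm cannot tell these neighbors apart and the adversary labels them afterwards, a standard symmetrization (averaging the algorithm's strategy over neighbor permutations) shows that w.l.o.g.\ the algorithm distributes matching mass uniformly over these neighbors; under this assumption, the adversary's labeling does not affect the algorithm's matching.

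Under uniform play, let $c_i:=\phi_i/|U_i|$, where $\phi_i$ is the fractional matching mass incident to $U_i$ already present when $U_{i-1}$ departs. Each $U_{i-1}$-vertex has free capacity $1-c_{i-1}$, a $1/(\lambda+1)$-fraction of any new mass lands on a dead end, and the rest becomes $\phi_i$. A short exchange argument shows that greedy play (saturating the free capacity at each step) is optimal; this yields
\begin{equation*}
  c_i=\frac{1-c_{i-1}}{\lambda+1}\quad (1\leq i\leq h),\qquad c_{h+j}=\frac{1-c_{h+j-1}}{\gamma_j+1}\quad (1\leq j\leq 3).
\end{equation*}
The first recursion converges to its fixed point $c^\ast=1/(\lambda+2)$ as $h\to\infty$, so iterating the second three more times gives an explicit value of $c_{h+3}$. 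Because all $U_{h+3}$-vertices then carry identical pre-mass $c_{h+3}$, the water-filling analysis of Kalyanasundaram and Pruhs applies verbatim to the triangle on $A\cup B$: in the limit $|A|\to\infty$ the algorithm gains at most $|A|\,\bigl(1-e^{-(1-c_{h+3})}\bigr)$ of additional mass there. Summing the per-layer contributions $(1-c_i)|U_i|$ and the triangle contribution and dividing by the perfect-matching size $\sum_{i=0}^{h+3}|U_i|$ yields, after $h\to\infty$ so that only the last few layers and the triangle dominate, a closed-form ratio $F(\lambda,\gamma_1,\gamma_2,\gamma_3)$.

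The last step is to minimize $F$. Since $F$ is non-convex already for $\ell=0$, I would not attempt analytic optimization but instead minimize $F$ with a numerical solver in Matlab and verify that the minimum is at most $0.6297$. The main obstacle is rigor: a numerical minimum is only a heuristic upper bound on $\inf F$, so to obtain a provable statement I would re-evaluate $F$ at the reported minimizer using controlled (e.g.\ interval or exact symbolic) arithmetic. As long as this verification succeeds, the bound $0.6297$ is provably correct even though we cannot claim it is the true infimum of $F$ within our parameter family.
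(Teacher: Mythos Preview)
Your high-level plan matches the paper's: set up the recursion $c_i=(1-c_{i-1})/(\lambda+1)$ (resp.\ $c_{h+j}=(1-c_{h+j-1})/(\gamma_j+1)$), pass to the fixed point as $h\to\infty$, bound the triangle by $1-e^{-(1-c_{h+\ell})}$, and minimize the resulting closed form numerically. The difference lies entirely in how you justify that the recursion actually governs an \emph{arbitrary} fractional algorithm.

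You invoke a ``standard symmetrization (averaging the algorithm's strategy over neighbor permutations)'' to reduce w.l.o.g.\ to uniform play. This step is the gap. In a one-shot setting the argument is indeed standard, but here the game has $h+\ell$ adaptive rounds: the algorithm's behavior at level~$i$ may depend on the realized masses on $U_{i-1}$ and on the adversary's labeling, so averaging over permutations at level~$i-1$ does not compose into a well-defined algorithm at level~$i$. What you would actually need is the stronger statement that water-filling is the \emph{optimal} fractional algorithm on this instance family, and that requires proof. The paper explicitly flags this issue (``It may, however, conceivably help the algorithm to match vertices to different fractions'') and also notes that one needs both an upper \emph{and} a lower bound on $p_i$, which is why the naive adversary ``put the least-matched vertices into $V_{i-1}$'' does not suffice.

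The paper's fix is not symmetrization but an adaptive adversary: Lemma~\ref{lem:match_val_distr_U} shows that, whatever distribution the algorithm writes on $N^+(U_{i-1})$, the adversary can choose the partition so that $p_i$ lands within $1/n_i$ of the water-filling value (via a simple swap argument). This makes the recursion hold for \emph{every} algorithm up to vanishing error. For the triangle, the masses on $A$ need not be uniform even after this, so Lemma~\ref{lem:match_val_final_level} supplies a separate adversary (label $a_i$ as the currently least-matched vertex) and proves that any algorithm is dominated by water-filling with the same \emph{average} pre-mass $p_A$; your appeal to Kalyanasundaram--Pruhs ``verbatim'' presupposes uniform pre-mass, which you have not established. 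Once you replace the symmetrization by these two adversary lemmas, the rest of your outline (closed form, numerical minimization, rigorous re-evaluation at the reported minimizer) is exactly what the paper does.
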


To prove this result, we have to bound both the value of the offline optimum and that of any online algorithm. Clearly, since $G$ has a perfect matching
and its vertex set is $U\cup V$ where $|U|=|V|$, we have $\OPT=|U|=|V|$.

Now let $\ALG$ be any fractional online algorithm. For all vertices in our graph, we determine the fraction to which they are matched. For simplicity, we make two assumptions that are without loss of generality. First, we assume that the algorithm only performs matches along an edge whenever one of its endpoints departs. Furthermore, when a vertex departs, we assume that the algorithm fully matches it unless all of its neighbors are fully matched. It can be seen by a simple exchange argument that the latter assumption is indeed without loss of generality (see also~\cite{HuangKangEtAl2020a}).

Note that we only need to consider the matching value distributed at the departure times of vertices in $U_0\cup\dots\cup U_{h+\ell-1}\cup B$. We do so in order of departure of the respective vertices, so we start with $U_{i-1}$ for $i \in \{1,\dots,h{+}\ell\}$.
Recall that, when these vertices reach their deadlines, their \emph{not departed} neighbors $N^+(U_{i-1}) \coloneqq U_i \cup V_{i-1}$ are indistinguishable for the algorithm. For simplicity, we would like to analyze the water-filling algorithm~\cite{KalyanasundaramP00} $\ALGw$ which also treats these vertices equally, i.e., matches them all to the same fraction. It may, however, conceivably help the algorithm to match vertices to \emph{different fractions}. In the following, we first show that this is (essentially) not the case.

We will define an adversary that assigns the vertices in $N^+(U_{i-1})$ to $U_i$ and $V_{i-1}$ (respecting $|U_{i-1}|=|V_{i-1}|$) based on the matching decisions of the algorithm. A first approach may be to just assign those vertices to $V_{i-1}$ that have been matched the least. We will, however, later need a lower \emph{and an upper} bound on the value to which $U_i$ has been matched. The latter approach only gives us the lower bound. 

We define $p_{i-1}$ such that the total matching value of the vertices in $U_{i-1}$ just after the departure of $U_{i-2}$ is ${p_{i-1}\cdot |U_{i-1}|}$. In other words, $p_{i-1}$ is the average matching value across all (again, conceivably different) matching values of vertices in $U_{i-1}$ at that point. We define $p_0 = 0$.

Note that, upon departure of $U_{i-1}$, \ALGw would assign a total fractional matching value of
\begin{equation}\label{eq:avg}
	d \coloneqq (1-p_{i-1}) \cdot |U_{i-1}|\ \frac{|U_i|}{|N^+(U_{i-1})|}
\end{equation}
to any set of $|U_i|$ vertices in $N^+(U_{i-1})$. Our adversary simply chooses an assignment of $N^+(U_{i-1})$ to $U_i$ and $V_{i-1}$ such that $p_i\cdot|U_i|$
is as close as possible to $d$. The following lemma shows that the error becomes vanishingly small.

\begin{lem}
\label{lem:match_val_distr_U}
Let $i \in [h{+}\ell]$ and define $n_i\coloneqq |U_i|$. For any given distribution of matching value to $N^+(U_{i-1})$ by the algorithm, there is an adversary that chooses sets $U_i$ and $V_{i-1}$ such that
		\begin{align*}
			p_i &= \frac{1-p_{i-1}}{\lambda+1} + \eps_1(i),&&\text{for }i\leq h \;\;\; \text{and}\\  \qquad p_i&=\frac{1-p_{i-1}}{\gamma_j+1} + \eps_1(h{+}j), &&\text{for } i=h{+}j,j\in[\ell],
		\end{align*}
		where the error term $\eps_1(i)$ fulfills $|\eps_1(i)| \le \frac{1}{n_i}.$
\end{lem}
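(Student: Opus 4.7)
The plan is to fix some step $i$ and let $m(v) \in [0,1]$ denote the fractional matching value distributed to $v \in N^+(U_{i-1})$ during the departure of $U_{i-1}$. Setting $D \coloneqq \sum_v m(v)$ and $N \coloneqq |N^+(U_{i-1})|$ (so $N = (\lambda+1)n_{i-1}$ for $i \leq h$ and $N = (\gamma_j+1) n_{h+j-1}$ for $i = h+j$), the first step is to pin down $D$ exactly. The WLOG assumption, combined with the fact that the subgraph between $U_{i-1}$ and $N^+(U_{i-1})$ is complete bipartite, implies that either every vertex of $U_{i-1}$ is fully matched upon departure, or else all of $N^+(U_{i-1})$ is fully matched. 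The latter case would force $D \geq N > (1-p_{i-1}) n_{i-1}$ (using $\lambda, \gamma_j > 1$ and $p_{i-1} \geq 0$), which exceeds the remaining capacity of $U_{i-1}$. So we must be in the former case, and $D = (1-p_{i-1}) n_{i-1}$.

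Next I would describe how the adversary partitions $N^+(U_{i-1})$ into $U_i$ and $V_{i-1}$ of the prescribed sizes $n_i$ and $n_{i-1}$ so that $p_i n_i = \sum_{v \in U_i} m(v)$ is close to the water-filling target $d \coloneqq (n_i/N) D$. The key observation is a discrete intermediate-value argument: among all size-$n_i$ subsets $S$ of $N^+(U_{i-1})$, the maximum of $\sum_{v \in S} m(v)$ is at least $d$ and the minimum is at most $d$, since $d$ is the average of $\sum_{v \in S} m(v)$ over all such subsets. One can pass from the maximum-sum subset to the minimum-sum subset by a sequence of single-element swaps, and each swap changes the sum by at most $1$ in absolute value because $m(v) \in [0,1]$. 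Hence some subset $S$ in this sequence satisfies $|\sum_{v \in S} m(v) - d| \leq 1$, and the adversary declares $U_i \coloneqq S$ and $V_{i-1} \coloneqq N^+(U_{i-1}) \setminus S$.

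Dividing the resulting inequality by $n_i$ yields $|p_i - D/N| \leq 1/n_i$, and plugging in $D/N = (1-p_{i-1})/(\lambda+1)$ in the first regime and $D/N = (1-p_{i-1})/(\gamma_j+1)$ in the second gives the claimed recursion with $|\eps_1(i)| \leq 1/n_i$. The main obstacle I anticipate is not the swap argument, which is a clean combinatorial fact, but rather the identity $D = (1-p_{i-1}) n_{i-1}$: it requires carefully leveraging the WLOG assumption, together with the complete-bipartite structure between $U_{i-1}$ and $N^+(U_{i-1})$, to rule out the degenerate case in which vertices of $U_{i-1}$ depart without being fully matched, which would otherwise introduce uncontrolled slack into the recursion.
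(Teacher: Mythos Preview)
Your proposal is correct and follows essentially the same approach as the paper. The paper argues by contradiction (assume the minimal-distance partition has distance $>1$, then swap a max element of $\hat U_i$ with a min element of $\hat V_{i-1}$ to get strictly closer), whereas you phrase it as a discrete intermediate-value argument along a swap path from the max-sum subset to the min-sum subset; these are equivalent routes to the same inequality $|m(U_i)-d|\le 1$. Your explicit verification that $D=(1-p_{i-1})n_{i-1}$ is actually more careful than the paper, which simply invokes the stated w.l.o.g.\ assumption without checking that the degenerate case (all of $N^+(U_{i-1})$ fully matched) cannot occur; so what you flag as the ``main obstacle'' is in fact something the paper treats as already settled by the preceding discussion.
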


\begin{proof}
	For now, let $i \in [h]$ and let a distribution of the matching value from $U_{i-1}$ to $N^+(U_{i-1})$ be given. Let $m(U_i)$ and $m(V_{i-1})$ be the total matching value received by $U_i$ and $V_{i-1}$, respectively. 
	The adversary decides on a partition $U_i\;\dot{\cup}\;V_{i-1}$ of $N^+(U_{i-1})$. 
	We let the adversary partition the vertices such that $m(U_i)$ is as close as possible to $d$ as in Equation~\eqref{eq:avg}. Note that here
	\begin{equation*}
	d= (1-p_{i-1}) |U_{i-1}|\ \frac{|U_i|}{|N(U_{i-1})|} = |U_i|\ \frac{1-p_{i-1}}{\lambda+1}.
	\end{equation*}
	Note that this value does not depend on the explicit partition since $|U_i|$ has a prescribed size which is not chosen by the adversary. It is clear that $d$ corresponds to the total matching value assigned to $U_i$ if all vertices in $N^+(U_{i-1})$ have received the same amount of matching value. It turns out that, independently of the assignment by the algorithm, it is always possible to assign sets $U_i$ and $V_{i-1}$ with the correct cardinalities such that 
	\begin{equation}
	\label{eq:adv_chosen_set_U_i}
	d - 1 \le m(U_i) \le d + 1.
	\end{equation}
	To verify this claim, assume for contradiction that $\hat{U}_i$ is the set with minimal distance $\left|m(\hat{U}_i) - d \right| > 1.$ Additionally, let $\hat{V}_{i-1} = N^+(U_{i-1}) \setminus \hat{U}_i$ be the complementary set in the partition.
	
	Firstly, assume that $m(\hat{U}_i)$ is larger than $d$.	Let $u \in \hat{U}_i$ be the element of $\hat{U}_i$ with maximum matching value and let $v \in \hat{V}_{i-1}$ be the element in $\hat{V}_{i-1}$ with minimum matching value.
	It must hold that $1 \ge m(u) > m(v) \ge 0$, otherwise the matching $m(\hat{U}_i)$ cannot exceed the weighted average $d$. 
	This implies $0<m(u)-m(v)\leq 1$.
	Hence when we exchange $u$ and $v$ in the sets $\hat{U}_i$ and $\hat{V}_{i-1}$, we have $m(\hat{U}_i \setminus \{u\} \cup \{v\}) - d = m(\hat{U}_i)-m(u)+m(v) - d$ and $0\leq m(\hat{U}_i)-d-(m(u)-m(v))< m(\hat{U}_i)-d,$
	so $\left|m(\hat{U}_i \setminus \{u\} \cup \{v\}) - d \right| < \left|m(\hat{U}_i) - d \right|,$
	a contradiction to the minimality of the distance.
	
	Secondly, when $m(\hat{U}_i)$ is smaller than $d$, let $u \in \hat{U}_i$ be the element of $\hat{U}_i$ with minimum matching value and let $v \in \hat{V}_{i-1}$ be the element in $\hat{V}_{i-1}$ with maximum matching value. It holds $0 \le m(u) < m(v) \le 1$, and hence we can again switch $u$ and $v$ to receive $\left|m(\hat{U}_i \setminus \{u\} \cup \{v\}) - d \right| < \left|m(\hat{U}_i) - d \right|$. Therefore, we have proven Equation \eqref{eq:adv_chosen_set_U_i} by contradiction. We divide it by $|U_i|$ to receive
	\begin{equation*}
	\frac{1-p_{i-1}}{\lambda+1} - \frac{1}{|U_i|} \le \frac{m(U_i)}{|U_i|} \le \frac{1-p_{i-1}}{\lambda+1} + \frac{1}{|U_i|}.
	\end{equation*}
	By the definition of $p_i$ as $m(U_i)/|U_i|$, we finally have	
	$$p_i = \frac{1-p_{i-1}}{\lambda+1} +\eps_1(i)$$ for some error term with $|\eps_1(i)|\leq \frac{1}{n_i}$.
	
	We repeat the above arguments for $j \in [\ell]$ with the corresponding growth parameters to obtain 
	$$p_{h+j} = \frac{1-p_{h+j-1}}{\gamma_j+1} +\eps_1(h{+}j)$$ with $|\eps_1(h{+}j)|\leq \frac{1}{n_{h+j}}.$
\end{proof}	

Via induction, we obtain a closed-form description of $p_i$ (see appendix):
\begin{equation}
\label{eq:closed_pi}
	p_i = \frac{1}{\lambda+2}\left(1-\left(\frac{-1}{\lambda+1}\right)^i\right) + \eps_2(i), \qquad \forall i \in [h],
\end{equation}
with $|\eps_2(i)| \le \frac{i}{n_i}$. Computing explicit formulas for $p_{j}, j >h$ is quite cumbersome. We detail the algebraic transformations for $\ell=3$ in the appendix.

We are interested in the matching value the algorithm gives to vertices in $V$. Let us define $q_i$ as the total matched fraction of the vertices in $V_i$ for all $i \in \{0,\dots,h{+}\ell{-}1\}$. Note that $B = V_{h+\ell}$ is matched differently, something we will consider at a later point.

\begin{lem}
\label{lem:match_val_distr_V}
Let $i \in \{0,\dots,h{+}\ell{-}1\}$. Then it holds that
	\begin{equation*}
		q_i = \overline{p}_{i+1} + \eps_3(i),
	\end{equation*}
	where $\overline{p}_{i+1} \coloneqq p_{i+1} - \eps_2(i{+}1)$ and $|\eps_3(i)|\leq \frac{i+3}{n_i}$.
\end{lem}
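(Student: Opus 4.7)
The plan is to apply flow conservation of matching value at the moment $U_i$ reaches its deadline, use Lemma~\ref{lem:match_val_distr_U} to rewrite $1-p_i$ in terms of $p_{i+1}$, and finally convert $p_{i+1}$ into $\overline{p}_{i+1}$ by absorbing the $\eps_2$-term.

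First I would observe that every unit of matching mass ever received by $V_i$ is deposited exactly at the departure of $U_i$: vertices in $V_i$ have no neighbors outside $U_i$, and by the simplifying assumptions mass is placed along an edge only when one of its endpoints departs. The full-match assumption distributes $(1-p_i)|U_i|$ fresh mass at that moment, split between $V_i$ and $U_{i+1}$. Since $U_{i+1}$ has received no mass prior to this step (its other neighborhood $U_{i+2}\cup V_{i+1}$ has not yet arrived), its share equals $p_{i+1}|U_{i+1}|$ by definition of $p_{i+1}$. Using $|V_i|=|U_i|$, this gives
\[
q_i \;=\; (1-p_i)\;-\;p_{i+1}\cdot\frac{|U_{i+1}|}{|U_i|}.
\]
I would then invoke Lemma~\ref{lem:match_val_distr_U}. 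For $i{+}1\le h$, the growth ratio equals $\lambda$ and the recursion rearranges to $1-p_i=(\lambda{+}1)\,p_{i+1}-(\lambda{+}1)\,\eps_1(i{+}1)$; after substitution the $\lambda\,p_{i+1}$ terms cancel and I obtain $q_i = p_{i+1}-(\lambda{+}1)\,\eps_1(i{+}1)$. For $i{+}1=h{+}j$ with $j\in[\ell]$, the same computation with $\gamma_j$ in place of $\lambda$ yields $q_i = p_{i+1}-(\gamma_j{+}1)\,\eps_1(i{+}1)$.

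Using $\overline{p}_{i+1}=p_{i+1}-\eps_2(i{+}1)$, I would set
\[
\eps_3(i)\;\coloneqq\;\eps_2(i{+}1)\;-\;c\,\eps_1(i{+}1),\qquad c\in\{\lambda{+}1,\gamma_j{+}1\},
\]
so that $q_i=\overline{p}_{i+1}+\eps_3(i)$ in both cases. Combining $|\eps_1(i{+}1)|\le 1/n_{i+1}$ with $|\eps_2(i{+}1)|\le (i{+}1)/n_{i+1}$ (the latter being~\eqref{eq:closed_pi} if $i{+}1\le h$ and its appendix analogue otherwise) yields $|\eps_3(i)|\le (i{+}1{+}c)/n_{i+1}$ by the triangle inequality. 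Since $n_{i+1}=(c{-}1)\,n_i$ with $c{-}1\in\{\lambda,\gamma_j\}$ strictly above $1$, the final inequality $(i{+}1{+}c)/((c{-}1)n_i)\le (i{+}3)/n_i$ reduces to $(2{-}c)(i{+}2)\le 0$, which holds because $c>2$; this gives the claimed bound.

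The main obstacle is simply the bookkeeping across the two regimes $i{+}1\le h$ and $i{+}1>h$: once the case split is made, everything is routine algebra. The subtlest point is the bound on $\eps_2(i{+}1)$ when $i{+}1>h$, which follows from iterating Lemma~\ref{lem:match_val_distr_U} through the $\gamma_j$-levels exactly as in the derivation of~\eqref{eq:closed_pi} and is essentially the computation deferred to the appendix.
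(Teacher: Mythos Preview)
Your proof is correct and follows essentially the same approach as the paper: both use the conservation identity $(1-p_i)|U_i|=p_{i+1}|U_{i+1}|+q_i|V_i|$, invoke Lemma~\ref{lem:match_val_distr_U} to reach $q_i=p_{i+1}-(\lambda{+}1)\,\eps_1(i{+}1)$ (resp.\ with $\gamma_j{+}1$), and then absorb $\eps_2(i{+}1)$ to define $\eps_3(i)$. You are merely more explicit than the paper about the $\gamma_j$ case and the final error estimate, which the paper dismisses as ``easy to see.''
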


\begin{proof}
	Again, we only prove the cases $i \in \{0,\dots,h{-}1\}$; the cases of larger $i$ are analogous with adapted growth parameters. Let a distribution of the matching value from $U_i$ to their neighbors be given. Since the total matching value in $U_i$ is $(1-p_i)n_i$, it holds that
	\begin{equation*}
	(1-p_i)n_i = p_{i+1}n_{i+1} + q_i n_i.
	\end{equation*}
	Rearranging this equation, we use the formula from Lemma \ref{lem:match_val_distr_U} to write
	\begin{align*}
	q_i n_i 	&= (1-p_i)n_i - p_{i+1}n_{i+1} \\
	&= (1-p_i)n_{i}  - \left( \frac{1-p_i}{\lambda+1}+\eps_1(i{+}1)\right) \lambda n_i\\
	&= n_i \left( \frac{1-p_i}{\lambda+1} -\lambda \eps_1(i{+}1) \right).
	\end{align*}
	With $n_i=|U_i| = |V_i|$ we can simplify to $q_i = \frac{1-p_i}{\lambda+1} -\lambda \eps_1(i{+}1).$ Inserting Lemma \ref{lem:match_val_distr_U} gives
	\begin{align*}
		q_i &= p_{i+1} - (\lambda+1) \eps_1(i{+}1)\\
			&= (p_{i+1} - \eps_2(i{+}1)) + (\eps_2(i{+}1) - (\lambda+1) \eps_1(i{+}1))\\
			&= \overline{p}_{i{+}1} + \eps_3(i),
	\end{align*}
	where it is easy to see that $|\eps_3(i)| \leq \frac{i+3}{n_i}$.
\end{proof}

In other words, we express the $q_i$ in terms of $p_{i+1}$ without the error terms $\eps_2(i{+}1)$ for all $i \in \{0,\dots, h{+}\ell{-}1\}$.

We can write the amount of matching value the vertex sets $V_i$ (except $B$) contribute to the algorithmic value as
\begin{equation*}
	\sum_{i=0}^{h+\ell-1} q_i|V_i|.
\end{equation*}

Finally, we consider the vertices in $A$. Recall that their total matching value is $p_{h+\ell}\cdot |A|$ before any vertex in $B$ has arrived. For simplicity, we define $p_A \coloneqq p_{h+\ell}$. We also define $\rho$ such that the additional matching value that $A$ receives (that is, at the departure times of vertices in $B$) is $\rho\cdot |A|$. In the following, we will give an upper bound on $\rho$.

Again, an analysis of \ALGw would be simpler but some carefulness is required because the matching value $p_A\cdot |A|$ is not necessarily distributed uniformly across $A$. Fortunately, a simple adversary for choosing the ordering of $A$ and $B$ suffices here.

\begin{lem}
\label{lem:match_val_final_level}
	The adversary can choose the ordering of $A$ and $B$ such that
	\begin{equation*}
		\rho \le 1- \exp({-(1- p_A)}) + \frac{2}{|A|}.
	\end{equation*}
\end{lem}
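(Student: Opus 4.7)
The plan is to design an adaptive adversary and collapse the analysis to tracking a single scalar, namely the \emph{average} matching value of the vertices still accessible to future arriving vertices of $B$. Concretely, right after the algorithm's matching decisions for $b_j$, the adversary designates $a_j$ to be the vertex in the current pool $\{a_j,\dots,a_{|A|}\}$ of minimum current matching value (breaking ties arbitrarily); $b_{j+1}$'s neighborhood is then forced to be the remaining pool. This labeling is always consistent with the prescribed edge structure.

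Write $n = |A|$ and let $\bar v_j$ denote the average matching value over $b_j$'s neighborhood (of size $n-j+1$) at the moment $b_j$ arrives; in particular $\bar v_1 = p_A$. After $b_j$ distributes $x_j \leq 1$ units of matching value among its neighbors, the pool's average rises to $\bar v_j + x_j/(n-j+1)$. Since the minimum of a finite multiset never exceeds its average, removing this minimum leaves a pool whose average is at least $\bar v_j + x_j/(n-j+1)$. By induction,
\begin{equation*}
	\bar v_j \;\geq\; p_A + \sum_{k=1}^{j-1}\frac{x_k}{n-k+1}.
\end{equation*}
Every vertex value is bounded by $1$, so $\bar v_j\leq 1$ for all $j$, yielding the weighted budget $\sum_{k=1}^{n-1}x_k/(n-k+1) \leq 1-p_A$. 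Moreover, $b_n$'s only neighbor is $a_n$ with residual $1-\bar v_n$, so $x_n \leq 1-p_A - \sum_{k=1}^{n-1}x_k/(n-k+1)$.

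What remains is a one-dimensional LP: maximize $\sum_{k=1}^n x_k$ subject to $x_k\in[0,1]$ and the two inequalities just derived. Substituting the $x_n$ bound into the objective reduces this to a weighted knapsack in which the weights $1/(n-k+1)$ are smallest for small $k$, so the optimum sets $x_k=1$ for $k=1,\dots,j^*$, where $j^*$ is the largest integer with $H_n-H_{n-j^*}\leq 1-p_A$. Bounding $H_n-H_{n-j}$ from below by the left Riemann sum $\ln\bigl((n+1)/(n-j+1)\bigr)$ gives $j^*\leq (n+1)\bigl(1-e^{-(1-p_A)}\bigr)$, and keeping track of the single possibly fractional $x_{j^*+1}$ then yields $\sum_k x_k \leq n\bigl(1-e^{-(1-p_A)}\bigr) + 2$. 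Dividing by $n=|A|$ gives the claim.

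The main subtlety, as the preceding text in the paper already hints, is that the values $f_i$ inherited by $A$ from the algorithm's earlier decisions need not be uniformly distributed; a direct water-filling argument would have to track the full distribution of matching values in the pool. The chosen adversary lets the entire argument collapse to the scalar $\bar v_j$, at the price only of the additive $2/|A|$ slack stemming from the integrality of the knapsack's breakpoint and the discrete-to-continuous conversion of the harmonic sum $H_n-H_{n-j}$ into a logarithm.
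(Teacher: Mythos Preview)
Your proof is correct and takes a genuinely different route from the paper's. Both arguments use the same adversary (label $a_j$ as the current minimum after $b_j$'s departure), but the analyses diverge from there.

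The paper establishes the bound in two stages: it first proves a coupling inequality $\rho\le\omega$, where $\omega$ is the value achieved by the water-filling algorithm \ALGw\ on a \emph{surrogate} instance in which every vertex of $A$ starts with exactly $p_A$; this step requires a somewhat delicate case analysis around the indices $\eta$ (the last step at which \ALGw\ can still distribute mass) and $i'$ (the last index with $\omega(a_i)<m(a_i)$). Only afterwards is $\omega$ bounded via harmonic sums.

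Your argument bypasses the coupling entirely. Tracking the single scalar $\bar v_j$ and using the elementary fact that removing a minimum cannot decrease the average yields the telescoping lower bound $\bar v_j\ge p_A+\sum_{k<j}x_k/(n-k+1)$; combined with $\bar v_j\le 1$ and $x_n\le 1-\bar v_n$, this turns the whole problem into a one-constraint fractional knapsack whose greedy optimum is exactly the water-filling profile. In effect, you recover (and bound) the water-filling value without ever introducing \ALGw\ as a separate object. This is shorter and more transparent, and it also makes clear \emph{why} water-filling is extremal here: it is the greedy solution of the very LP your averaging inequality produces. The paper's approach, on the other hand, makes the role of \ALGw\ explicit as a benchmark algorithm, which may be conceptually useful elsewhere; and its harmonic-number manipulation is essentially the same as yours once the coupling is in place.
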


\begin{proof}
	As described in Section \ref{sec:description}, the vertices in $B$ arrive, and immediately depart again, sequentially.
	The vertices are labeled $b_1,\dots,b_{|B|}$ in this order, and any vertex $b_i$ is adjacent to the vertices $a_i, \dots, a_{|A|}$.
	Note that the algorithm only learns about the identity of vertex $a_i$ after the departure of $b_{i}$.
	Here, the adversary simply chooses $a_i$ in every round such that it has the minimum current matching value out of all remaining unlabeled vertices in $A$.
	
	We now bound $\rho \cdot|A|$, the fractional matching value placed by the algorithm on edges between $A$ and $B$, by the matching value $\omega \cdot|A|$ placed by \ALGw on the same instance: Assuming that the matching value in $A$ is equally distributed before $B$ arrives, i.e., all vertices have \emph{exactly} $p_A$ matching value, this algorithm simply matches $b_i$ equally among all vertices $a_i, \dots, a_{|A|}$.
	We claim that $\rho \leq \omega$. In the following, we denote the matching value placed on $a\in A$ by our algorithm and \ALGw at the time of the departure of $b_i$ by $m_i(a)$ and $\omega_i(a)$, respectively. We omit the index $i$ if we refer to the final value.
	
	Let $\eta\in [|B|]$ the final index for which \ALGw is able to assign matching value to $A$. Note that every time $i$  appears before $b_\eta$, $\omega_i(A)\geq m_i(A)$, as \ALGw always matches the current vertex $b_i$ fully if possible. Also, for $i \geq \eta$ it holds that $1 = \omega(a_i)\geq m(a_i)$, so $\omega(\{a_i:i\geq \eta\})\geq m(\{a_i:i\geq \eta\})$. Now let $i'<\eta$ be the maximal index $i$ so that $\omega(a_i)< m(a_i)$. By the adversary's choice of $a_{i'}$, it holds that
	\begin{align*}
	m_{i'}(\{a_i:i>i'\})&\geq m(a_{i'})|\{a_i:i>i'\}|\\
	&>\omega(a_{i'})|\{a_i:i>i'\}|=\omega_{i'}(\{a_i:i>i'\}).
	\end{align*}
	Using
	\begin{align*}
		&\omega(\{a_i:i\leq i'\})+\omega_{i'}(\{a_i:i>i'\})=\omega_{i'}(A)\\
		&\ge m_{i'}(A)=m(\{a_i:i\leq i'\})+m_{i'}(\{a_i:i>i'\}),
	\end{align*}
	we have $\omega(\{a_i:i\leq i'\})\geq m(\{a_i:i\leq i'\})$ and as $i'<\eta$ is the last vertex with $\omega(a_i) < m(a_i)$, we know that for all remaining vertices \ALGw assigns more matching value. Therefore, $\omega(\{a_i:i'<i< \eta\})\geq m(\{a_i:i'<i<\eta\})$ and thus $\omega(A)\geq m(A)$. Finally, we note that if $i'$ does not exist, $\omega(\{a_i:i< \eta\})\geq m(\{a_i:i<\eta\})$ holds trivially.
	
	All that remains to prove the lemma is to show that $\omega \le 1-\exp({-(1-p_A)})$. Let again $\eta \in [|B|]$ be the final index for which \ALGw is able to assign value to $A$. At the departure of all vertices $b_i,\, i < \eta$ in $B$, \ALGw assigns $\nicefrac{1}{(|A|-i+1)}$ of matching value each to $a_i,\dots,a_{|A|}$, while at the departure of $b_i,\, i > \eta$ no value is assigned. 
	So the average fractional matching value $\omega$ fulfills $\omega |A| \le \eta$, which holds with equality exactly if the last active vertex $b_\eta$ is fully matched.
		
	Let us again refer to the matching value placed by \ALGw on a vertex $a_i$ by $\omega(a_i)$. After $b_\eta$ has departed, all remaining vertices $a_\eta,\dots,a_{|A|}$ are already fully matched. Hence, for all $i \ge \eta$, we have $\omega(a_i) = 1$. At the same time, $\omega(a_i), i \ge \eta$ is composed of the initial value $p_A$ plus the entire value assigned by the vertices $b_1,\dots,b_\eta$. For all $i \ge \eta$ we can obtain:
	\begin{align*}
	1 = \omega(a_i)&\ge p_A + \sum_{j=1}^{\eta-1} \frac{1}{(|A|-j+1)}\\
	&= p_A + H_{|A|} - H_{|A|-\eta+1},
	\end{align*}
	where $H_n$ is the $n$-th harmonic number which we estimate by $\ln(n) \le H_n \le \ln(n+1)$. Hence, we have
	\begin{align*}
	1 &\ge p_A + \ln(|A|) - \ln(|A|-\eta+2) \ge p_A + \ln \left( \frac{|A|}{|A|(1-\omega) +2} \right),
	\end{align*}
	where we used $\omega |A| \le \eta$. By further rearranging,
	\begin{alignat*}{2}
		\exp({-(1- p_A)}) 	&\le  1 - \exp({-(1- p_A)})  + \frac{2}{|A|}.
	\end{alignat*}
	This completes the proof.
\end{proof}

We have now computed all necessary values for our formula. We double-count the matching value by counting the fractional value to which each \emph{vertex} is matched and establish for the entire value of ALG:
\begin{equation*}
2 \ALG = \sum_{i=0}^{h+2} q_i|V_i| + |U\setminus A| + p_A|A| + 2 \rho |A|.
\end{equation*}

For $\ell=0$, by inserting our formulas into $\nicefrac{\ALG}{\OPT}$ and taking the limit $h\to\infty$, we obtain in congruence with Huang et al.~\cite{HuangKangEtAl2020a}
\begin{align*}
		\frac{\lambda-1}{\lambda}\cdot\left(1-\exp\left(-\frac{\lambda+1}{\lambda+2}\right)\right)+\frac{\lambda+1}{\lambda\cdot(\lambda+2)}
\end{align*}
as an upper bound on the competitive ratio. Interestingly, this function is non-convex as its derivative has a local maximum at $\lambda\approx 10.0266$.

\begin{table*}
\centering
\label{tab:results}
\begin{tabular}{ccccccccc} 
\toprule
$\ell$ & $\lambda$ & $\cdots$ & $\gamma_{\ell-4}$ & $\gamma_{\ell-3}$ & $\gamma_{\ell-2}$ & $\gamma_{\ell-1}$ & $\gamma_{\ell}$ & impossibility\ \\
\midrule 
0 & 7.233629 & $\cdots$ & -- & -- & -- & -- & -- & 0.631744 \\
1 & 2.581174 & $\cdots$  & -- & -- & -- & -- &  8.053197 & 0.629748 \\
2 & 3.148324 & $\cdots$ & -- & -- & -- & 2.390115 & 7.874599 & 0.629678 \\
3 & 2.875859 & $\cdots$ & -- & -- & 3.249854 & 2.403421 & 7.864072 & 0.629674 \\
$\vdots$ &$\vdots$ & $\ddots$ & $\vdots$ & $\vdots$& $\vdots$ & $\vdots$ &$\vdots$ & $\vdots$ \\
10 & 2.94419 & $\cdots$ & 2.986001 & 2.843101 & 3.241640 &  2.404098 & 7.863523 & 0.629674 \\
\bottomrule
\end{tabular}
\caption{The results of the numerical optimization. All numbers are rounded to the sixth decimal digit.}
\end{table*}

Computing explicit formulas for all $q_j$ when $j>h$ is quite cumbersome, and the same holds for the explicit formula of the resulting lower bound on the competitive ratio. We showcase the result of this computation for $\ell=3$. We refer to the appendix for all details on the computation. The error terms in the formula vanish as we take the limit $h \to \infty$.
The final formula for our upper bound on the competitive ratio depends only on $\gamma_1, \gamma_2, \gamma_3$ and $\lambda$:

\begin{align*}
&\frac{\lambda+\gamma_1(\gamma_2+1)(\lambda-1)}{2(\lambda+\gamb(\lambda-1))}\\
+&\frac{\lambda^2+\gamma_1}{2(\lambda+2)(\gamma_1+1)(\lambda+\gamb(\lambda-1))}\\
+&\frac{\gamma_1(\lambda-1)}{2(\lambda+\gamb(\lambda-1))}\cdot\frac{\gamma_1(\lambda+2)+1}{(\gamma_2+1)(\gamma_1+1)(\lambda+2)}\\
+&\frac{\gamma_1\gamma_2(\lambda-1)}{2(\lambda+\gamb(\lambda-1))}\cdot\frac{\gamma_2(\gamma_1+1)(\lambda+2)+(\lambda+1)}{(\gamma_2+1)(\gamma_1+1)(\lambda+2)}\\
+&\frac{\gamma_1\gamma_2\gamma_3(\lambda-1)}{(\lambda+\gamb(\lambda-1))} \, \cdot \\
&\quad \Bigg[1-\exp\left(-\frac{1}{\gamma_3+1}\left(\gamma_3{+}\frac{\gamma_1(\lambda+2)+1}{(\gamma_2+1)(\gamma_1+1)(\lambda+2)}\right)\right)\Bigg],
\end{align*}
using the abbreviation $\gamb\coloneqq \sum_{i=1}^3 \prod_{j=1}^i \gamma_j$.\\
We use the numerical computing software Matlab for the numerical optimization. Using a trusted-region algorithm for unconstrained multivariate minimization we receive the values shown in Table~\ref{tab:results}.
In particular, we obtain an upper bound smaller than $0.6297$. This completes the proof of Theorem~\ref{thm:main}.

\bibliographystyle{abbrv}
\bibliography{UpperBoundOnlineMatching}

\appendix
\section{Computing closed forms for $p_i$}

We use induction to obtain the closed form
\begin{equation*}
	p_i = \frac{1}{\lambda+2}\left(1-\left(\frac{-1}{\lambda+1}\right)^i\right) +\eps_2(i), \qquad \forall i \in [h],
\end{equation*}
with an error term fulfilling $|\eps_2(i)| \leq \frac{i}{n_i}$. Since we set $p_0 = 0 = \frac{1}{\lambda+2}\left(1-(\nicefrac{-1}{\lambda+1})^0\right)$, the induction base holds. For the induction hypothesis we use Lemma \ref{lem:match_val_distr_U}.

\begin{gather*}
\begin{aligned}
	p_i &=  \frac{1-p_{i-1}}{\lambda+1} +\eps_1(i)\\
		&= \frac{1}{\lambda+1} - \frac{1}{\lambda+1} \left(\frac{1}{\lambda+2}\left(1 - \left(\frac{-1}{\lambda+1}\right)^{i-1} \right) +\eps_2(i{-}1) \right)+\eps_1(i)\\
		&= \frac{1}{\lambda+1} - \frac{1}{\lambda+2}\left(\frac{1}{\lambda+1} + \left(\frac{-1}{\lambda+1}\right)^i \right) +\frac{\eps_2(i{-}1)}{\lambda+1}+\eps_1(i)\\
		&= \frac{1}{\lambda+1} - \frac{1}{(\lambda+2)(\lambda+1)} - \frac{1}{\lambda+2}\left(\frac{-1}{\lambda+1}\right)^i +\underbrace{\frac{\eps_2(i{-}1)}{\lambda+1}+\eps_1(i)}_{\eqqcolon \eps_2(i)}\\
		&= \frac{1}{\lambda+2} \left(1 - \left(\frac{-1}{\lambda+1}\right)^i\right) +\eps_2(i),
\end{aligned}
\end{gather*}
with
\begin{equation*}
	|\eps_2(i)|\leq \frac{|\eps_2(i{-}1)|}{\lambda+1}+|\eps_1(i)|\leq \frac{i{-}1}{\lambda n_{i-1}}+\frac{1}{n_i} =\frac{i}{n_i},
\end{equation*}
where we used $n_i=\lambda n_{i-1}$. Analogously, applying Lemma \ref{lem:match_val_distr_U} to the closed-form of $p_h$, we obtain 
\begin{align}
p_{h+1} &= \frac{1}{\gamma_1+1} \left(\frac{\lambda+1}{\lambda+2} \left(1 - \left(\frac{-1}{\lambda+1}\right)^{h+1} \right) \right)\nonumber \\ &+\eps_2(h{+}1), \label{eq:pbar1}\\
p_{h+2} &=\frac{1}{(\gamma_2+1)(\gamma_1+1)(\lambda+2)}\left(\gamma_1(\lambda+2) + 1 - \left(\frac{-1}{\lambda+1}\right)^h \right) \nonumber \\ 
&+\eps_2(h{+}2), \label{eq:pbar2}\\
p_{h+3} &=\frac{(\gamma_2\gamma_1+\gamma_2+1)(\lambda+2)-1 + \left(\frac{-1}{\lambda+1}\right)^h}{(\gamma_3+1)(\gamma_2+1)(\gamma_1+1)(\lambda+2)}\nonumber \\ &+\eps_2(h{+}3). \label{eq:pbar3}
\end{align}

\section{Detailed algebraic computations}
\label{sec:app_alg_comp}

\noindent In this section, we present the detailed computations to receive our upper bound formula which is optimized at the end of Section~\ref{sec:proof}. We recall the value of the optimal solution of our instance:
\begin{align}
	2 \OPT	&=|U|+|V|=2|U| =2\sum_{i=0}^{h+3} |U_i| \nonumber\\
&= 2\, \Bigg( k\left(\frac{\lambda^{h+1}-1}{\lambda-1}\right)+k\lambda^h\underbrace{(\gamma_1+\gamma_1\gamma_2+\gamma_1\gamma_2\gamma_3)}_{\eqqcolon \bar{\gamma}}\Bigg)\nonumber\\
&=\frac{2k\lambda^h}{\lambda-1}\left(\lambda-\nicefrac{1}{\lambda^h}+\bar{\gamma}(\lambda-1)\right).
\end{align}
The algorithmic solution value is given by
\begin{align*}
	2\ALG 	&= \sum_{i=0}^{h+2} q_i|V_i| + |U\setminus A| + p_A|A| + 2 \rho |A| \\
			&= 2\rho|A|+\sum_{i=0}^{h+2} |U_i|+\sum_{i=0}^{h-1}q_{i}|V_i|\\
			&+q_h|V_h|+q_{h+1}|V_{h+1}|+q_{h+2}|V_{h+2}|+p_A|A|.
\end{align*}
We insert the values of $p_i$, $q_i$ and the cardinalities $n_i$:
\begin{align}
	2 \ALG =&\ 2\rho|A|\label{eq:o2t}\\
			&+\frac{k\lambda^h}{\lambda-1}\left(\lambda-\nicefrac{1}{\lambda^h}+\gamma_1(\gamma_2+1)(\lambda-1)\right)\label{eq:o1}\\
			&+\sum_{i=0}^{h-1} n_i(p_{i+1}-\eps_2(i+1)) +\sum_{i=0}^{h-1}n_i\eps_3(i)\label{eq:o4}\\
			&+n_h (p_{h+1} - \eps_2(h{+}1)) +n_h\eps_3(h)\label{eq:o5}\\
			&+n_{h+1} (p_{h+2} - \eps_2(h{+}2)) + n_{h+1}\eps_3(h{+}1)\label{eq:o3}\\
			&+n_{h+2} (p_A - \eps_2(h{+}3)) + n_{h+2}\eps_3(h{+}2) \label{eq:o2andahalf}\\
			&+n_A (p_A - \eps_2(h{+}3)) + n_A \eps_2(h{+}3). \label{eq:o2}
\end{align}
 \noindent By equations \eqref{eq:closed_pi}, \eqref{eq:pbar1}-\eqref{eq:pbar3}, the terms $p_i - \eps_2(i)$ do not contain any error terms (they are directly subtracted). Hence all remaining error terms in the above formula sum up to:
\begin{align*}\eps &\coloneqq \left|\sum_{i=0}^{h+2}n_i\eps_3(i)+n_A\eps_2(h{+}3)\right|\\
&\leq \sum_{i=0}^{h+2} (i{+}3)+(h{+}3) \le \frac{(h{+}5)^2+h{+}5}{2}+h{+} 3 \in \Omega(h^2).
\end{align*}
As $2\OPT$ is dominated by $\lambda^h$ and $\lambda>1$, $\frac{\eps}{2 \OPT}$ vanishes for $h\to \infty$. We insert \eqref{eq:pbar1}-\eqref{eq:pbar3} into lines \eqref{eq:o5}-\eqref{eq:o2}. Then, we divide all lines \eqref{eq:o2t}-\eqref{eq:o2} without error terms by $2 \OPT$ and take limits $h \to \infty$:

\begin{align*}
	\frac{\eqref{eq:o1}}{2\OPT}&=\frac{k\lambda^h}{\lambda-1}\frac{\lambda-1}{k\lambda^h}\frac{\left(\lambda-\nicefrac{1}{\lambda^h}+\gamma_1(\gamma_2+1)(\lambda-1)\right)}{2\left(\lambda-\nicefrac{1}{\lambda^h}+\bar{\gamma}(\lambda-1)\right)}\\
	&\to \frac{\left(\lambda+\gamma_1(\gamma_2+1)(\lambda-1)\right)}{2\left(\lambda+\bar{\gamma}(\lambda-1)\right)}=\eqref{eq:p1},
\end{align*}
\begin{align*}
	\frac{\eqref{eq:o4}}{2\OPT}&=\frac{k\lambda^h}{\lambda+2}\frac{\lambda-1}{k\lambda^h}\frac{\left(\frac{1-\nicefrac{1}{\lambda^h}}{\lambda-1}-\frac{(\nicefrac{-1}{\lambda+1})^h-\nicefrac{1}{\lambda^h}}{2\lambda+1}\right)}{2\left(\lambda-\nicefrac{1}{\lambda^h}+\bar{\gamma}(\lambda-1)\right)}\\
	&\to \frac{\lambda-1}{\lambda+2}\frac{\left(\frac{1}{\lambda-1}\right)}{2\left(\lambda+\bar{\gamma}(\lambda-1)\right)}\\ 
	&=\frac{1}{2(\lambda+2)\left(\lambda+\bar{\gamma}(\lambda-1)\right)}=\text{part of } \eqref{eq:p45},
\end{align*}
\begin{gather*}
	\begin{aligned}
	\frac{\eqref{eq:o5}}{2\OPT}
	&=\frac{(\lambda+1)\left(1-(\nicefrac{-1}{\lambda+1})^{h+1}\right) }{(\lambda+2)(\gamma_1+1)}\cdot \lambda^hk\frac{(\lambda-1)}{2 \lambda^hk \left(\lambda-\nicefrac{1}{\lambda^h}+\bar{\gamma}(\lambda-1)\right)}\\
	&\to\frac{(\lambda+1) (\lambda-1)}{2 (\lambda+2)(\gamma_1+1)\lambda+\bar{\gamma}(\lambda-1)} = \text{part of } \eqref{eq:p45},
	\end{aligned}
\end{gather*}
\begin{gather*}
	\begin{aligned}
	\frac{\eqref{eq:o3}}{2\OPT}&=
	\frac{\gamma_1(\lambda+2)+1-(\nicefrac{-1}{\lambda+1})^h}{(\gamma_2+1)(\gamma_1+1)(\lambda+2)} \cdot \gamma_1\lambda^hk\frac{(\lambda-1)}{2 \lambda^hk\left(\lambda-\nicefrac{1}{\lambda^h}+\bar{\gamma}(\lambda-1)\right)}\\	
	&\to \frac{\gamma_1(\lambda+2)+1}{(\gamma_2+1)(\gamma_1+1)(\lambda+2)} \frac{\gamma_1(\lambda-1)}{2\left(\lambda+\bar{\gamma}(\lambda-1)\right)}=\eqref{eq:p3},
	\end{aligned}
\end{gather*}
\begin{gather*}
	\begin{aligned}
	\frac{\eqref{eq:o2andahalf}+\eqref{eq:o2}}{2\OPT}&=\frac{(\lambda-1)(\lambda^hk\gamma_1\gamma_2+\lambda^hk\gamma_1\gamma_2\gamma_3)}{2 \lambda^hk\left(\lambda-\nicefrac{1}{\lambda^h}+\bar{\gamma}(\lambda-1)\right)} \, \cdot \\
	&\qquad \frac{(\gamma_2\gamma_1+\gamma_2+1)(\lambda+2)-1+(\nicefrac{-1}{\lambda+1})^h}{(\gamma_3+1)(\gamma_2+1)(\gamma_1+1)(\lambda+1)}\\
	&=\frac{(\lambda-1)\gamma_1\gamma_2}{2\left(\lambda-\nicefrac{1}{\lambda^h}+\bar{\gamma}(\lambda-1)\right)}\, \cdot \\
	&\qquad \frac{\gamma_2(\gamma_1+1)(\lambda+2)+(\lambda+1)+(\nicefrac{-1}{\lambda+1})^h}{(\gamma_2+1)(\gamma_1+1)(\lambda+1)}\\
	&\to\frac{\gamma_1\gamma_2(\lambda-1)}{2(\lambda+\gamb(\lambda-1))} \cdot\frac{\gamma_2(\gamma_1+1)(\lambda+2)+(\lambda+1)}{(\gamma_2+1)(\gamma_1+1)(\lambda+2)}\\
	&=\eqref{eq:p2},
	\end{aligned}
\end{gather*}
\begin{gather*}
	\begin{aligned}
	\frac{\eqref{eq:o2t}}{2\OPT}&\leq\frac{|A|(1-\exp(-(1-p_A))+\nicefrac{2}{|A|})}{\OPT}\\
	&=\frac{2}{\OPT}+\frac{k\lambda^h\gamma_1\gamma_2\gamma_3(\lambda-1)}{k\lambda^h\left(\lambda-\nicefrac{1}{\lambda^h}+\bar{\gamma}(\lambda-1)\right)} \cdot \Bigg[1{-}\exp\Bigg(-1 \, + \\
	&\qquad \frac{\gamma_2(\gamma_1+1)(\lambda+2)+(\lambda+1)+(\nicefrac{-1}{\lambda+1})^h}{(\gamma_3+1)(\gamma_2+1)(\gamma_1+1)(\lambda+2)}+\eps_2(h+3)\Bigg)\Bigg]\\
	&\to \frac{\gamma_1\gamma_2\gamma_3(\lambda-1)}{\left(\lambda+\bar{\gamma}(\lambda-1)\right)} \, \cdot \\
	&\qquad \left[1{-}\exp\left(-1+\frac{\gamma_2(\gamma_1+1)(\lambda+2)+(\lambda+1)}{(\gamma_3+1)(\gamma_2+1)(\gamma_1+1)(\lambda+2)}\right)\right]\\
	&=\eqref{eq:p2t}.
	\end{aligned}
\end{gather*}
As $h\to\infty$, our upper bound on the competitive ratio of ALG therefore approaches
\begin{align}
&\frac{\lambda+\gamma_1(\gamma_2+1)(\lambda-1)}{2(\lambda+\gamb(\lambda-1))}\label{eq:p1} \\
+&\frac{\lambda^2+\gamma_1}{2(\lambda+2)(\gamma_1+1)(\lambda+\gamb(\lambda-1))}\label{eq:p45}\\
+&\frac{\gamma_1(\lambda-1)}{2(\lambda+\gamb(\lambda-1))} \cdot\frac{\gamma_1(\lambda+2)+1}{(\gamma_2+1)(\gamma_1+1)(\lambda+2)}\label{eq:p3}\\
+&\frac{\gamma_1\gamma_2(\lambda-1)}{2(\lambda+\gamb(\lambda-1))} \cdot\frac{\gamma_2(\gamma_1+1)(\lambda+2)+(\lambda+1)}{(\gamma_2+1)(\gamma_1+1)(\lambda+2)}\label{eq:p2}\\
+&\frac{\gamma_1\gamma_2\gamma_3(\lambda-1)}{(\lambda+\gamb(\lambda-1))} \cdot \Bigg[1-\exp\Bigg(-\frac{1}{\gamma_3+1} \, \cdot \nonumber\\
&\qquad \left(\gamma_3+\frac{\gamma_1(\lambda+2)+1}{(\gamma_2+1)(\gamma_1+1)(\lambda+2)}\right)\Bigg)\Bigg]. \label{eq:p2t}
\end{align}

\end{document}